\theoremstyle{plain}
\newtheorem{thm}{Theorem}
\newtheorem{lemma}{Lemma}
\begin{document}

\title{The exact complexity of a Robinson tiling}
\date{}
\author{Galanov Ilya \\ \href{mailto:galanov@lipn.univ-paris13.fr}{galanov@lipn.univ-paris13.fr}}
\maketitle

\begin{abstract}

We find the exact formula for the number of distinct $n \times n$ square patterns which appear in a Robinson tiling made of one infinite order supertile.

\end{abstract}

\section{Introduction}
Raphael Robinson in his work about the undecidability of the domino problem  $[\ref{RR}]$ introduced a set of six tiles depicted in Figure  ~\ref{robinson-tiles}. These tiles can be rotated and reflected, one tile can fit the other only in such a way that the arrowhead matches the arrow tail and each $2 \times 2$ block must contain exactly one \emph{bumpy corner}, the leftmost in Figure 1.

\begin{figure}[htp]
 \begin{center}
  \subfloat[]{\includegraphics[width=0.15\textwidth]{./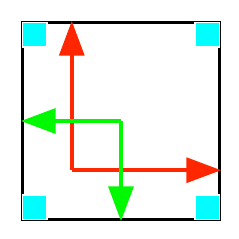}\label{fig:f1} }
  \subfloat[]{\includegraphics[width=0.15\textwidth]{./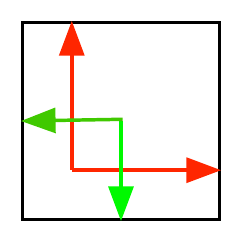}\label{fig:f2}} 
  \subfloat[]{\includegraphics[width=0.15\textwidth]{./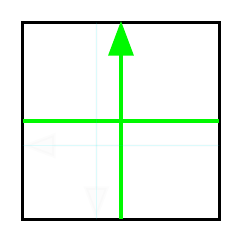}}   
  \subfloat[]{\includegraphics[width=0.15\textwidth]{./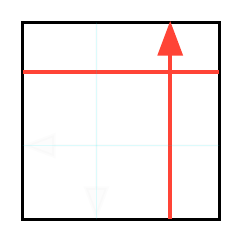}}   
  \subfloat[]{\includegraphics[width=0.15\textwidth]{./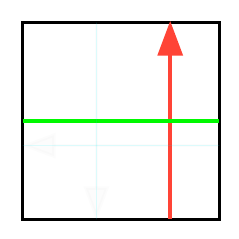}}   
  \subfloat[]{\includegraphics[width=0.15\textwidth]{./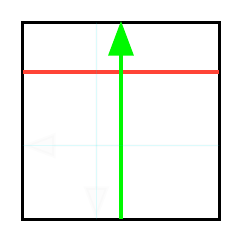}}   
  \end{center}
  \caption{Tiles of type (a) are called \emph{bumpy corners}, tiles of type (b) are called \emph{corners}, all the other tiles are called \emph{arms}.}
\label{robinson-tiles}
\end{figure}

 \begin{figure}[!tbp]
  \centering
  \subfloat[]{\includegraphics[width=0.213\textwidth]{./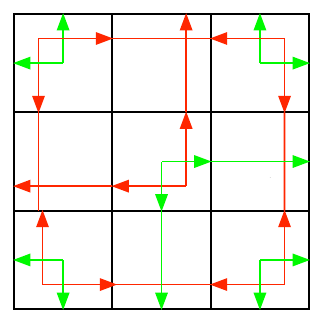}\label{fig:f1} }
  \qquad
  \qquad
  \subfloat[]{\includegraphics[width=0.4\textwidth]{./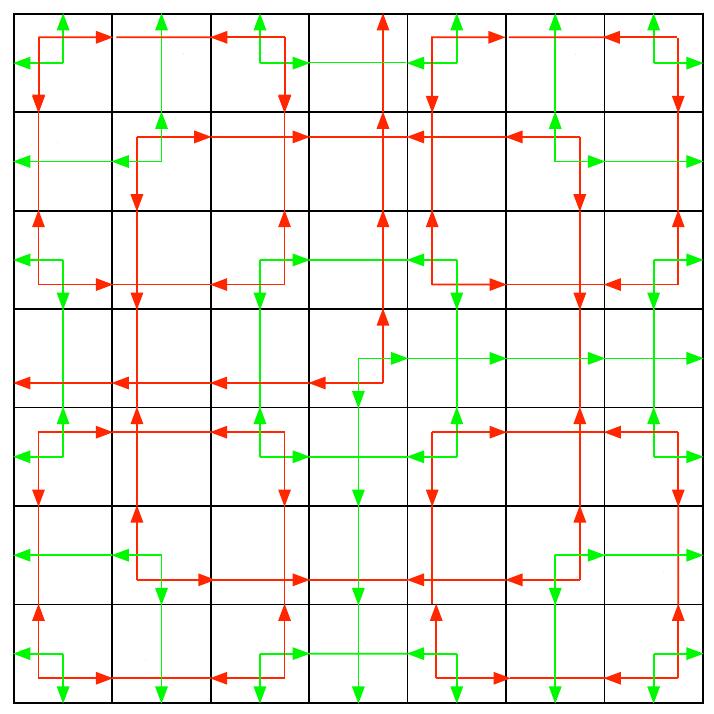}\label{fig:f2}}
  \caption{Supertiles of the second and third rank. }
\end{figure}
It is possible to tile the Euclidean plane with copies of these six tiles, but only in an \emph{aperiodic} way. The key to this result is that any Robinson tiling has a hierarchical structure: for all $n$, define a supertile of rank $n$ as shown in Figure 2.
Bumpy corner tiles are said to be supertiles of the first rank. An increasing union of supertiles of rank $n$ is called an infinite order supertile.  The Robinson tiling can be either made of only \emph{one} infinite order supertile or contain \emph{two} or  \emph{four} infinite order supertiles.

We will prove that the number of distinct blocks of size $n \times n$ (with $n \ge 2$) that appear in Robinson tiling made of one infinite order supertile is defined by the formula 

\begin{IEEEeqnarray}{cccl}
\label{comp}
A(n) & =  &  \> 32n^{2}  + 72 n \cdot 2^{\lfloor\log_2 n\rfloor} - 48 \cdot 2^{2\lfloor\log_2 n\rfloor} .
\end{IEEEeqnarray}

Similar result for the number of factors in \emph{paper folding sequence} has been obtained by Jean-Paul Allouche in [\ref{allouche}]. 

\section{Complexity of Robinson tiling}

In this paper, we say that a tiling of an $n \times n$ square is $\emph{correct}$ if it can appear in a Robinson tiling made of one infinite supertile. In this case, defining a correct $n \times n$ block is equivalent to defining the hierarchy of intersecting squares, see Figures 3 and 4. This can be done by placing corner tiles corresponding to supertiles of all ranks, note that some of them may be outside of our block.

\begin{figure}[!tbp]
  \label{hier-a}
  \centering
  \subfloat[]{\includegraphics[width=0.4\textwidth]{./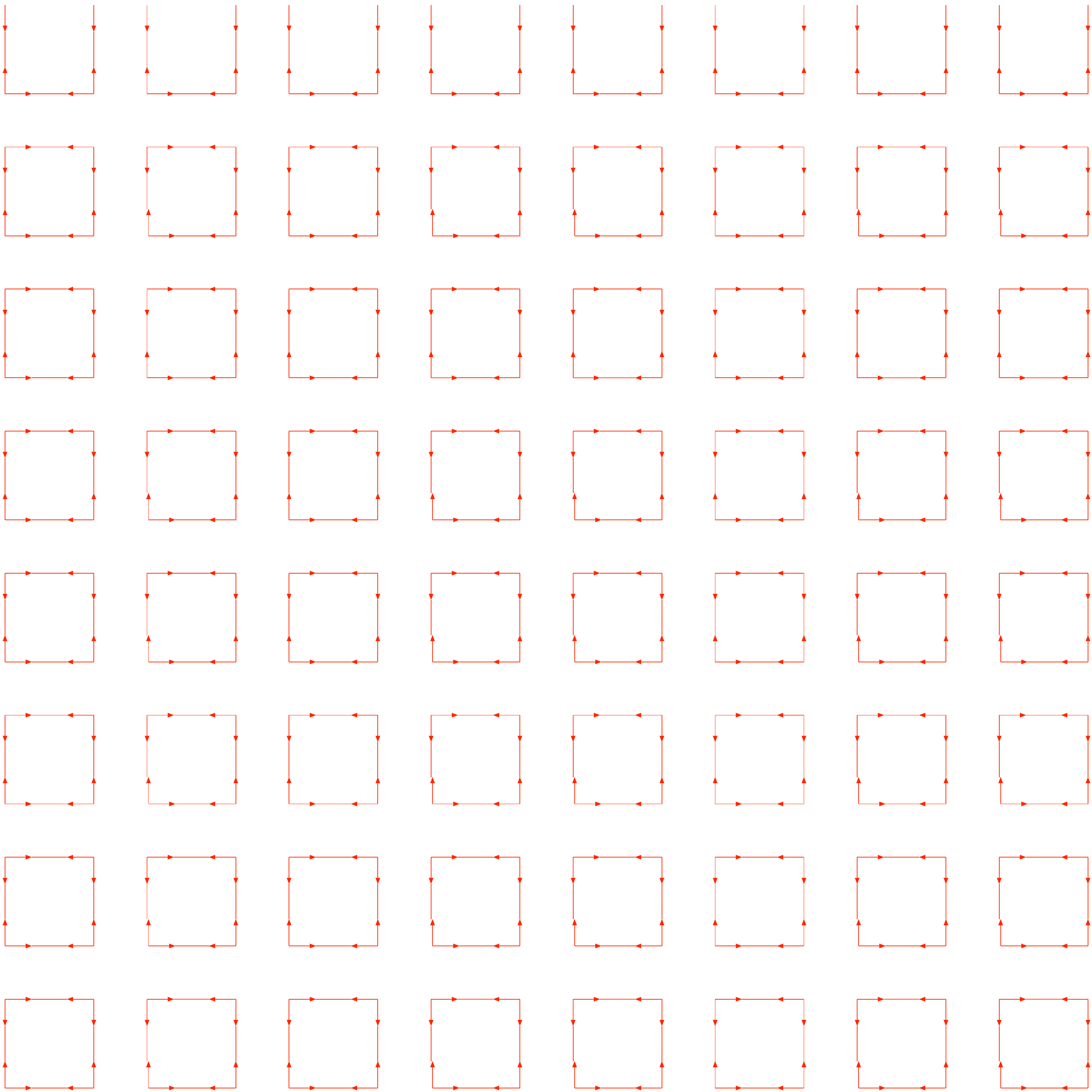}\label{fig:f1} }
  \qquad
  \qquad
  \subfloat[]{\includegraphics[width=0.4\textwidth]{./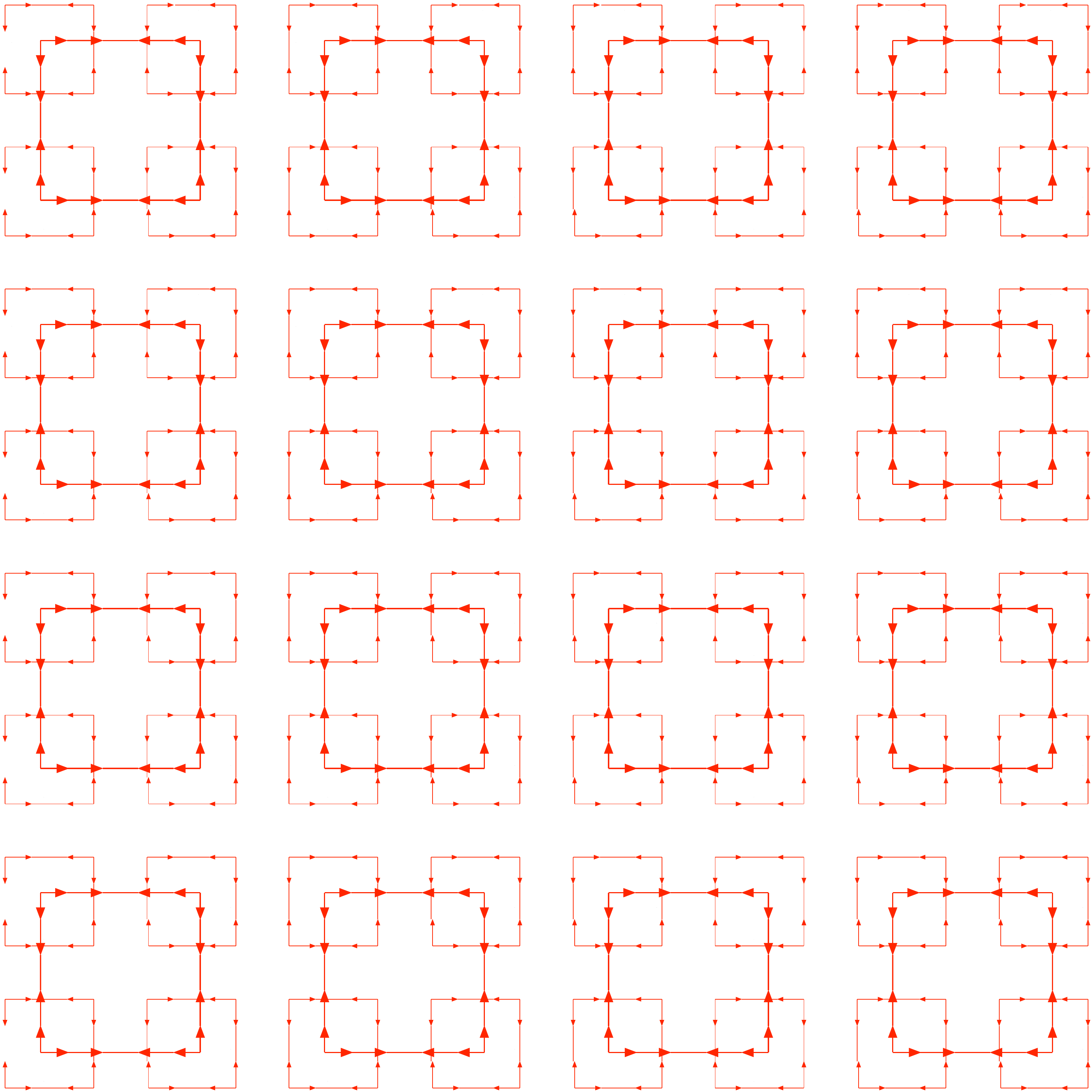}\label{fig:f2}}
  \caption{Hierarchical structure of Robinson tilings.}
\end{figure}

\begin{figure}[htp]
 \label{hier-b}
 \begin{center}
  \includegraphics[scale=0.4]{./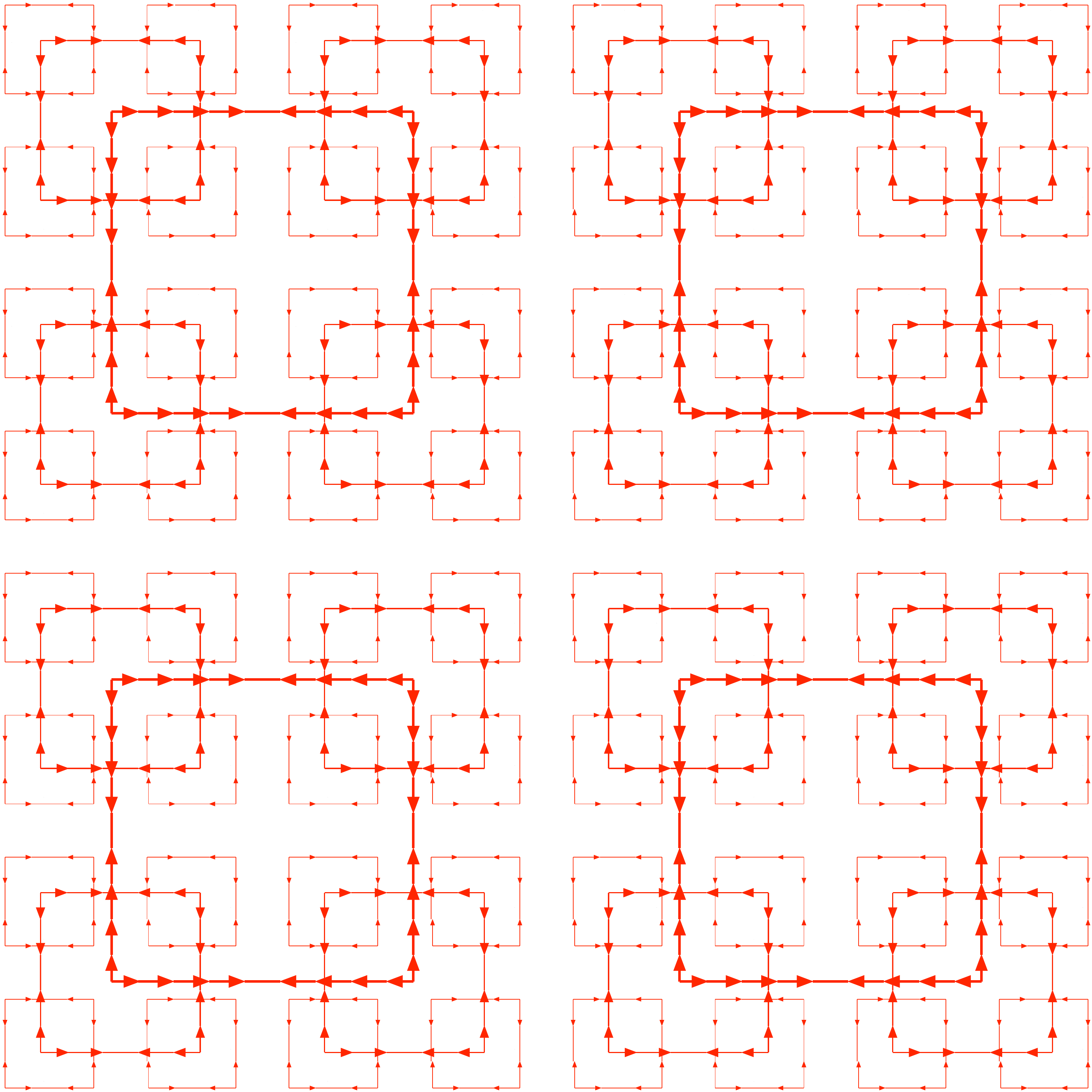}
 \end{center}
  \caption{Hierarchy.}
\end{figure}


We have an $n \times n $ square denoted by $S_n$. We will mark all cells of $S_n$ with a symbol $V$ if it is possible to place a corner tile in this cell, with a symbol $C$ if we have chosen this cell to be a corner and with a black dot if it is impossible to place a corner tile.  
Initially, we have just an $n \times n$ block with all cells marked with  $V$. The first step will be to choose places for the corner tiles of rank 1. There are only four variants to do so, example for step 1, $n = 5$, is in Figure \ref{step-one}. 
Then, at each step $i \ge1$, we choose places for corner tiles corresponding to supertiles of rank $i$. Let us remark that when we place a corner tile of rank $i > 1$, it fixes the orientation of the corner tiles of rank $i-1$ as in Figure \ref{step-two}. We stop when the number of vacant places is less than four. 
 \begin{figure}[htp]
 \begin{center}
  \includegraphics[scale=0.5]{./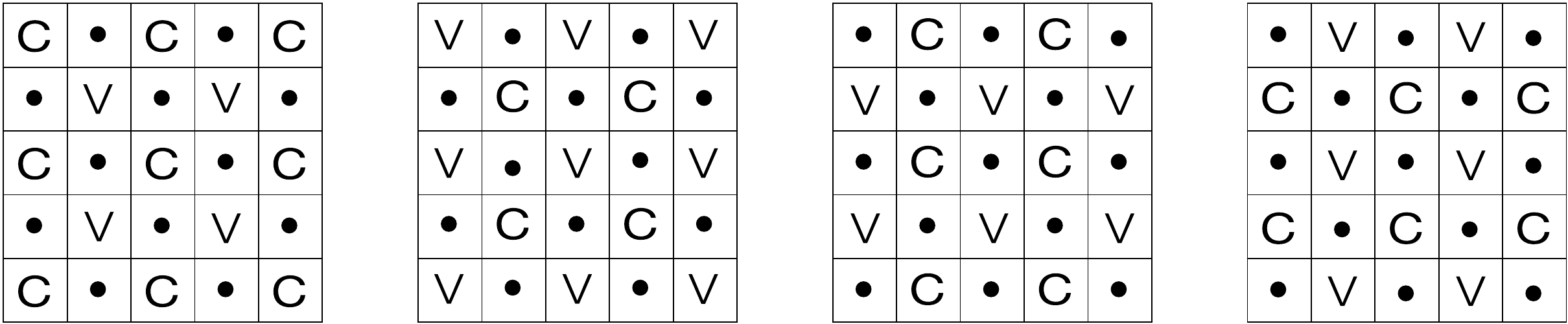}
 \end{center}
   \caption{Four variants for corner tiles at step one.}

\label{step-one}
\end{figure}

 \begin{figure}[htp]
 \begin{center}
  \includegraphics[scale=0.5]{./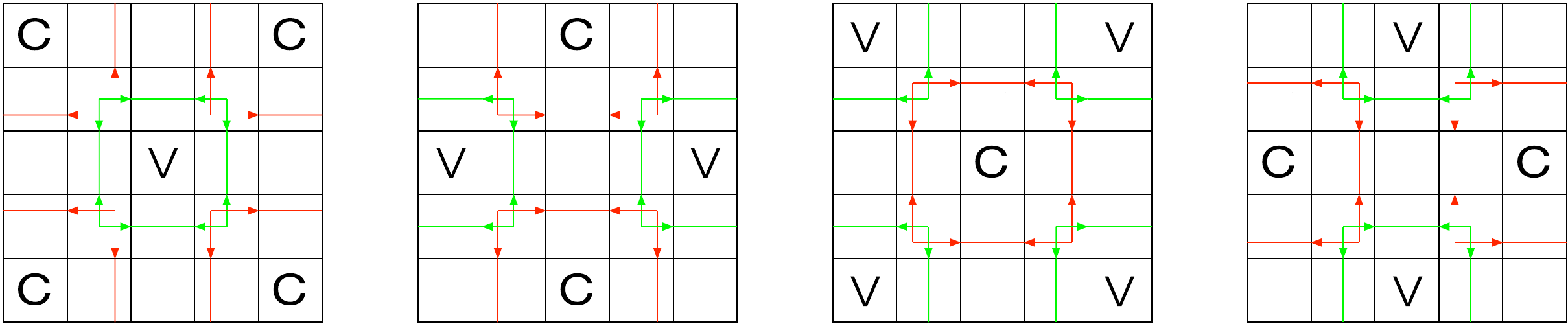}
 \end{center}
 \caption{Four variants for corner tiles at step two with a corner tile in position [1,2]. }

\label{step-two}
\end{figure}

 At each step, the set of all possible correct tilings of $S_n$ is divided into four disjoint classes. Resulting classes by construction can have either one or two vacant places. 

\begin{figure}[!tbp]
  \centering
  \subfloat[]{\includegraphics[width=0.45\textwidth]{./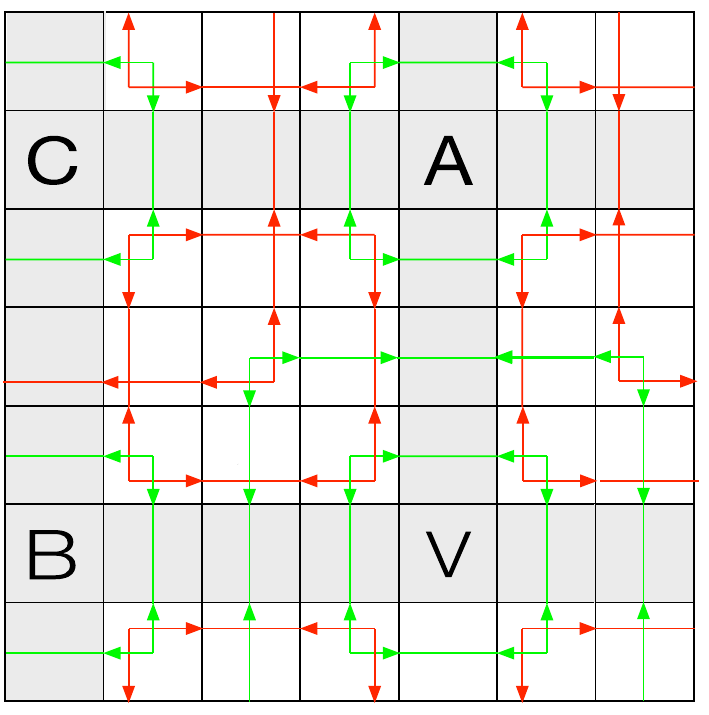}\label{fig:f1}}
  \hfill
  \subfloat[]{\includegraphics[width=0.45\textwidth]{./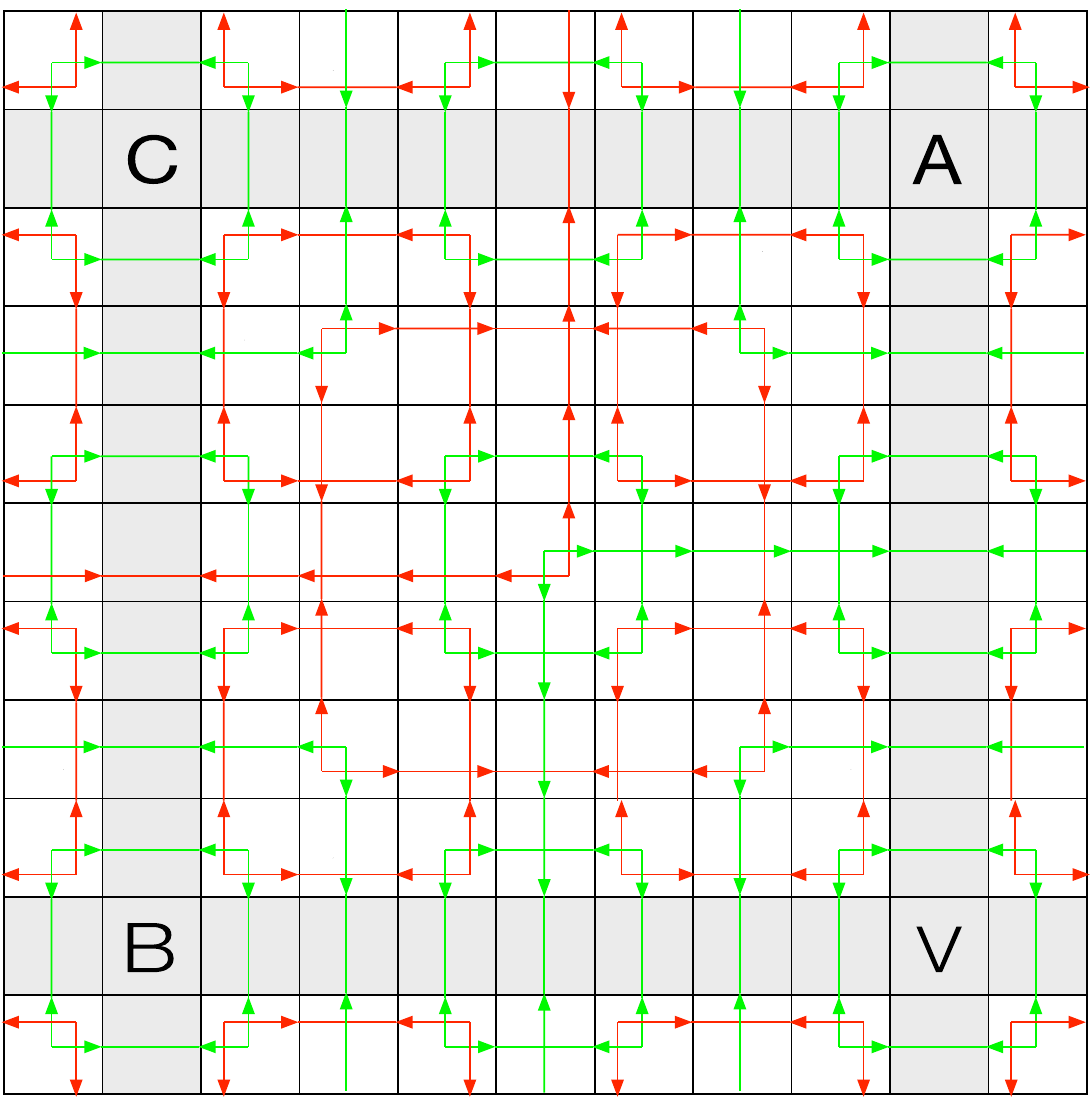}\label{fig:f2}}
  \caption{Example for one vacant place. All the cells are defined except for two rows and two columns. One of the 'crossroads' has to be a corner tile (marked with $C$) and we will mark the other two with letters $A$ and $B$.} 
  \label{1w}
\end{figure}

\begin{lemma}
\label{lem}
If two squares have the same number of vacant places for corner tiles, there is the same number of possibilities to complete them to a correct pattern.
\end{lemma}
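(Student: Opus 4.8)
The plan is to prove that the number of vacant places is a complete invariant for the completion count: if one shows that the whole branching process generating the correct extensions of a partial tiling is governed, combinatorially, by this single number, then any two squares with the same count of vacant places must admit the same number of completions. Concretely, I would introduce $f(k)$ for the number of ways to complete a configuration that currently has $k$ vacant places, and the task reduces to proving that $f$ is well defined, i.e.\ independent of everything about the configuration except $k$.

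First I would pin down the state carried by a partial tiling. After corner tiles of ranks $1,\dots,i$ have been placed, every cell of $S_n$ bears one of the labels $C$, $V$ or a black dot, and all remaining freedom lives in the cells still marked $V$ --- the vacant places sitting at the crossroads of the undefined rows and columns, as in Figure~\ref{1w}. The crucial structural fact I would establish is that the arrangement of these vacant places, together with the constraints the already-fixed hierarchy imposes around them, is determined up to the symmetries of the Robinson tile set (rotations, reflections, and the translations matching equivalent positions in the hierarchy) by the number $k$ alone. This is where the self-similarity of the supertiles does the work: the way a higher-rank corner tile, once chosen, fixes the orientations beneath it and removes vacant places is the same near every crossroad.

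With this in hand I would close the argument by induction on the number of steps remaining before the process terminates, which is finite since $S_n$ is finite and the rank grows at each step. At the next step one chooses the corner tiles of the next rank among the four crossroads; by the structural fact the number of admissible choices and, for each choice, the number of vacant places left in the resulting class depend only on $k$. Each resulting class has one or two vacant places by construction, so by the induction hypothesis its completion count is already a function of that number; summing over the four choices exhibits $f(k)$ as a function of $k$ alone, which is the claim. The main obstacle is precisely the structural fact: verifying the uniform effect of placing higher-rank corners irrespective of where $S_n$ lies inside the infinite supertile, and in particular controlling the boundary, since some of the corner tiles organizing the hierarchy fall outside the block and must be shown to influence the interior only through the count $k$ that we track.
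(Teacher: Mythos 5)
Your proposal is correct and follows essentially the same route as the paper's own proof: the paper likewise reduces the lemma to (i) a bijection between completions of two configurations that have reached the terminal stage with the same number of vacant places, realized by choosing the same tiles at the distinguished crossroad cells marked $A$, $B$, $C$, $V$, and (ii) the remark that the branching of the corner-placement procedure --- which classes appear and with how many vacant places --- is governed by the vacant-place count alone, which is exactly your ``structural fact'' combined with your induction on the remaining steps. Note only that your phrase ``each resulting class has one or two vacant places'' should refer to the \emph{terminal} classes (intermediate classes can still have many vacant places), just as in the paper's description of the procedure.
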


\begin{proof}
 
 First let us prove that once $n,m>2$, if two squares $S_n$ and $S_m$ at the end of the procedure described earlier remain with the same number of vacant places (thus one of two), then there are exactly the same number of possibilities to complete both.

Suppose both squares have only one vacant place, then they have all the cells filled except for two rows and two columns. One of the 'crossroads' has to be a corner tile (marked with $C$) and the other two we will mark with letters $A$ and $B$ as in Figure \ref{1w}. Indeed, for any correct tiling of one square we can specify one tiling of the other simply by choosing the same tiles for cells marked with A, B, C, and V. Same holds true for the situation with two vacant places. 

To conclude the proof, it remains to note that the number of classes with one (two) vacant places is completely defined by the initial number of vacant places $V$.
\end{proof}

Using Lemma 1 we can obtain recurrence relations for the number of correct tilings. All we need to do is to count the number of vacant places after placing the corner tiles of the first and second ranks and then to find any smaller square with the same number of vacant places. 

Denote by $A_1$ the number of correct tilings of $S_2$ \emph{with} a corner tile in the position [1,1] and by $A_n$ with $n>1$ for the number of all correct tilings of $S_n$. Denote by $B_n$ the number of correct tilings of $S_{2n+1}$ with a corner tile in the position [1,2]. If $n>1$, $A_n$ is a sum of all possible tilings with corner tile in positions [1,1], [2,2], [2,1] and [1,2].

If $n = 2k$ is an even number then after the first step we have exactly $k \times  k$ vacant places for all four possibilities to place first rank corner tiles. By using Lemma 1 we can state that once $n>1$, the number of all correct tilings of $S_{2n}$ with a corner tile in the position [1,1] is equal to the number of all correct tiling of $S_n$. The same is true for any other choice at step one,  which gives us the first recurrence relation:
\begin{align}
A_{2n} \quad & = 4 \cdot  A_n, \qquad n>0 .
\end{align}

In the case when $ n  =  2k+1$  is odd number we can write the number of vacant places as follows :

\begin{itemize}
\item corner tile in the position [1,1] : $k \times k$  vacant places  ;
\item corner tile in the position [2,2] : $(k+1) \times (k+1)$   ;
\item corner tile in the position [1,2] : $k \times (k+1)$  ;
\item corner tile in the position [2,1] : $k \times (k+1)$  .
\end{itemize}

The number of possibilities for last the two options was already denoted by $B_n$. Now we can write the second recurrence relation:

\begin{align}
A_{2n+1}& = A_n + A_{n+1}+ 2\cdot B_n, \qquad n>0 
\end{align}

Now we need to find recurrence relations for $B_n$. Again, by carefully examining all the possibilities for the second order corner tiles we can write the last two recurrence relations: 

\begin{align}
B_{2n}  \quad& = 2 \cdot A_n  + 2 \cdot B_n ,\\
B_{2n+1} & = 2 \cdot A_{n+1} + 2 \cdot B_{n}, \qquad n>0.
\end{align}
Values of $A_1$ and $B_1$ can be found by an exhaustive search:

\begin{IEEEeqnarray}{rCl}
A_1 & = & 56 ;  \nonumber\\ \nonumber
B_1 & = & 124.
\end{IEEEeqnarray}
The solution for recurrence relations above can be written as  

\begin{equation} 
A_n = a(n) \cdot A_1 + b(n) \cdot B_1, \nonumber
\end{equation}
where 

\begin{IEEEeqnarray}{cccl}
\label{an}
a(n) & =  &  \> 5n^{2}  - 12 n \cdot 2^{\lfloor\log_2 n\rfloor} + 8 \cdot 2^{2\lfloor\log_2 n\rfloor}    \\
\label{bn}
b(n) & =  &-   2 n ^2 + 6 n \cdot  2^{\lfloor\log_2 n\rfloor}  - 4 \cdot 2^{2\lfloor\log_2 n\rfloor} 
\end{IEEEeqnarray}
The sum of (\ref{an}) and (\ref{bn}) gives us:

\begin{thm}
\label{T}
For any Robinson tiling made of one infinite order supertile, once $n>1$, the number of distinct $n \times n$ square blocks is given by 

\begin{IEEEeqnarray}{cccl}
A(n) & =  &  \> 32n^{2}  + 72 n \cdot 2^{\lfloor\log_2 n\rfloor} - 48 \cdot 2^{2\lfloor\log_2 n\rfloor}.
\nonumber
\end{IEEEeqnarray}
\end{thm}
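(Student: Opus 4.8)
The plan is to observe that, by the decomposition $A_n = a(n)A_1 + b(n)B_1$ recorded just above the theorem, the statement is equivalent to the single closed form $A_n = 32n^2 + 72 n\cdot 2^{\lfloor\log_2 n\rfloor} - 48\cdot 2^{2\lfloor\log_2 n\rfloor}$: substituting $A_1=56$ and $B_1=124$ into $56\,a(n)+124\,b(n)$ yields exactly this expression. So it suffices to verify this formula directly from the four recurrence relations. Throughout I write $p = 2^{\lfloor\log_2 n\rfloor}$, so that $2^{2\lfloor\log_2 n\rfloor} = p^2$ and the target reads $A_n = 32n^2 + 72np - 48p^2$.

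The one structural difficulty is that the recurrence for $A_{2n+1}$ feeds on $B_n$, so an induction on the $A$-sequence alone cannot close; one must carry the $B$-sequence along. The ``basis'' coefficient sequences one would obtain for $B_n$ (the analogues of $a(n)$ and $b(n)$) are \emph{not} of the clean quadratic-in-$(n,p)$ shape, so I would not try to solve for them. Instead I would compute a handful of values and observe the clean companion identity $B_n = A_n + 32 n + 36 p$, that is $B_n = 32n^2 + 72np - 48p^2 + 32n + 36p$, and then prove the pair of closed forms for $A_n$ and $B_n$ simultaneously by strong induction on $n$.

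For the induction, the base case $n=1$ is a direct check ($A_1 = 32+72-48 = 56$ and $B_1 = 56+32+36 = 124$). For the inductive step I record the elementary floor-logarithm facts: for every $n\ge 1$ one has $2^{\lfloor\log_2 2n\rfloor} = 2^{\lfloor\log_2(2n+1)\rfloor} = 2p$, while $2^{\lfloor\log_2(n+1)\rfloor} = p$ unless $n+1$ is a power of two, in which case it equals $2p$. The even indices are then easy: substituting into the recurrences for $A_{2n}$ and $B_{2n}$ and using $p(2n) = 2p$ turns both into clean polynomial identities in $n$ and $p$. The same works for the odd indices $A_{2n+1}$ and $B_{2n+1}$ in the generic situation where $n+1$ is not a power of two, since then every term on the right carries the same $p$.

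The main obstacle is precisely the boundary case of the odd recurrences, $n = 2^{j}-1$, where $n+1 = 2p$ is a power of two and $2^{\lfloor\log_2(n+1)\rfloor}$ jumps from $p$ to $2p$. Here the naive polynomial identity in $(n,p)$ fails: the term $A_{n+1}$ now contributes $144\,np$ rather than $72\,np$, so the coefficient of $np$ does not balance. The discrepancy is resolved only after substituting the defining constraint $n+1 = 2p$, equivalently $n = 2p-1$, which collapses both sides to the same polynomial in the single variable $p$ (for $A_{2n+1}$ one checks that each side equals $896p^2 - 400p + 32$). I would therefore present the odd step as two sub-cases, carrying out the generic identity symbolically and then reconciling the power-of-two case via the relation $n=2p-1$. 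Once both closed forms hold for all $n$, the formula for $A_n$ is exactly the statement of the theorem, since $2^{2\lfloor\log_2 n\rfloor} = p^2$.
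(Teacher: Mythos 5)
Your proof is correct, and it verifies every identity it needs: with $p=2^{\lfloor\log_2 n\rfloor}$, the companion formula $B_n = A_n + 32n + 36p$ does hold (it gives $B_1 = 56+32+36 = 124$), all four recurrences check out symbolically in the generic case, and your boundary-case computation is right — for $n = 2p-1$ one gets $A_n = 224p^2-200p+32$, $A_{n+1}=224p^2$, $B_n = 224p^2-100p$, so both sides of $A_{2n+1}=A_n+A_{n+1}+2B_n$ equal $896p^2-400p+32$, and both sides of $B_{2n+1}=2A_{n+1}+2B_n$ equal $896p^2-200p$. Your route differs from the paper's in how the recurrences are solved. The paper treats $A_1$ and $B_1$ as formal parameters and asserts the basis decomposition $A_n = a(n)A_1 + b(n)B_1$ with $a(n)$ and $b(n)$ as in (\ref{an}) and (\ref{bn}), then obtains the theorem as $56\,a(n)+124\,b(n)$; it never actually verifies that $a(n)$ and $b(n)$ solve the system, and such a verification would in any case require carrying along the analogous (messier, non-homogeneous) coefficient sequences for $B_n$ — precisely the difficulty you identified. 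Your alternative sidesteps the basis decomposition entirely: you prove the final numeric closed form for $A_n$ and the companion closed form for $B_n$ simultaneously by strong induction, isolating the only genuinely delicate point, namely the jump of $2^{\lfloor\log_2(n+1)\rfloor}$ when $n+1$ is a power of two. What your approach buys is a self-contained and fully checkable proof of the step the paper leaves implicit; what the paper's formulation buys is the structural information of how the count depends linearly on the two seed values $A_1$ and $B_1$ independently, which your combined formula discards.
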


\section{Concluding Remarks}

Franz Gähler and Johan Nilsson conjectured in [\ref{nilsson}] that the number of distinct $n \times n$ blocks in 2D paper folding sequence is equal to

\[
\label{paper}
P(n)=  12n^{2}  + 24 n \cdot 2^{\lfloor\log_2 n\rfloor} - 16 \cdot 2^{2\lfloor\log_2 n\rfloor} -4, \qquad n\ge3.
\]

This formula looks very similar to (\ref{comp}).  The author believes that the method described in this paper may be  modified to prove this conjecture as well as to find other complexity formulas, in particular, for other tilings that have Toeplitz-like structure (see [\ref{allouche}]).

\end{document}